\documentclass[11pt,a4paper]{article}

\PassOptionsToPackage{colorlinks=true, linkcolor=black, citecolor=blue!50!black, urlcolor=blue!50!black}{hyperref}

\usepackage{float}        
\usepackage{placeins}    
\usepackage{booktabs}    
\usepackage{xcolor}

\usepackage{geometry}
\usepackage[utf8]{inputenc}
\usepackage[T1]{fontenc}
\usepackage[english]{babel}
\usepackage{amsmath,amssymb,amsthm}
\usepackage{bm}
\usepackage{enumitem}

\usepackage{graphicx}
\usepackage{subcaption}  

\usepackage[round]{natbib}

\usepackage{authblk}

\usepackage{hyperref}

\newtheorem{proposition}{Proposition}
\newtheorem{remark}{Remark}

\newcommand{\bx}{\bm{x}}
\newcommand{\T}{^{\mathsf{T}}}
\newcommand{\Yhat}{\widehat{Y}}
\newcommand{\Nhat}{\widehat{N}}
\newcommand{\Xhat}{\widehat{\bm{X}}}
\newcommand{\E}{\mathbb{E}}
\newcommand{\Var}{\mathrm{Var}}
\newcommand{\Vhat}{\widehat{\Var}}
\newcommand{\mh}{\widehat{m}}
\newcommand{\Bhat}{\widehat{\bm{B}}}

\title{{\bfseries Direct domain estimation via regression-tree-assisted\\ estimators in the production of official statistics}}

\author{Juan Pablo Ferreira \\ \vspace{5pt}
\small Facultad de Ciencias Económicas y de Administración, Universidad de la República, Uruguay\\
\small \texttt{juanpablo.ferreira@fcea.edu.uy}}
\date{January 15, 2025}

\AtBeginDocument{
  \setlength{\abovedisplayskip}{10pt}       
  \setlength{\belowdisplayskip}{10pt}
  \setlength{\abovedisplayshortskip}{8pt}
  \setlength{\belowdisplayshortskip}{8pt}
}

\begin{document}

\maketitle
\vspace{-2em} 

\begin{abstract}
\noindent
National statistical offices (NSOs) produce their estimates under a single weighting system (the \emph{uni-weight} approach): one set of weights, independent of the variable of interest, is used to estimate multiple parameters and multiple subpopulations (domains). In this paper we study, within the family of model-assisted estimators and from a design-based perspective of \emph{direct} estimation, the use of regression trees as the assisting model for estimating totals in unplanned domains. We distinguish two strategies: (i) fitting a single tree at the population level and deriving from it uni-weight weights applicable to any domain (RT$_U$), and (ii) fitting a domain-specific tree (RT$_D$). We show that both estimators can be written as weighted sums with weights that do not depend on $y$, preserving the uni-weight property and additivity (\emph{benchmarking}) with respect to the population total. Extending to trees the classical result of \citet{estevao1999}, we argue why the estimator built from a population-level model tends to behave like the Horvitz--Thompson estimator within domains, whereas the domain-specific model can achieve substantial variance reductions. A simulation study based on microdata from the Uruguayan Continuous Household Survey (ECH) illustrates the behavior of the estimators at the population level and by department.\\[4pt]
\noindent\textbf{Keywords:} domain estimation; model-assisted estimators; regression trees; calibration; official statistics; uni-weight.
\end{abstract}

\section{Introduction}

Sample surveys are one of the main sources of information for the production of official statistics. In the practice of national statistical offices (NSOs), estimates are obtained by attaching a weight to each sampled unit and using a \emph{single} weighting system to estimate different parameters (totals, means, ratios) and different subsets of the population. This approach, known as \emph{uni-weight} \citep{estevao1999}, has decisive operational advantages: it simplifies the computation of point estimates, creates economies of scale, suits continuous surveys that must release results in a timely fashion, and guarantees internal consistency among the figures disseminated. The weights are typically built through regression or calibration estimators \citep{deville1992,sarndal2007}, accounting for production needs.

The uni-weight approach is not, however, optimal for every parameter or domain. The efficiency of a model-assisted estimator depends on the predictive power of the assisting model for the variable of interest \citep{sarndal1992,breidt2017}; a single model, fitted to the whole population, does not necessarily capture the relationship between the auxiliary variables and the study variable within a particular subdomain. This tension between operational convenience and local efficiency is the starting point of this work.

\paragraph{Modern assisting models.} The recent literature has incorporated flexible prediction techniques as assisting models within the design-based framework. \citet{breidt2017} review the use of modern prediction methods. \citet{mcconville2019} propose an estimator assisted by a regression tree and, building on \citet{toth2011}, establish its design consistency; the estimator admits an appealing interpretation as a post-stratified estimator whose post-strata are selected automatically by the recursive partitioning algorithm. \citet{dagdoug2023} extend the approach to random forests and, relevantly for this work, distinguish estimators based on partitions built at the population level from those based on local partitions. The use of the lasso as a variable-selection mechanism was studied by \citet{mcconville2017}.

\paragraph{Domain estimation.} The problem of estimating parameters for subpopulations admits two broad families of solutions. On one hand, \emph{model-based} or indirect small area estimation (SAE), based on random-effects models that borrow strength across domains \citep{rao2015}; recent work incorporates machine learning into this framework through mixed models of the random-forest type \citep{frink2024}. On the other hand, \emph{design-based} or direct estimation, which uses only the observations within the domain (possibly augmented with auxiliary information) and preserves design properties such as asymptotic unbiasedness and design-based variance estimation. \citet{sarndal1984} compares both strategies as functions of sample size, sampling fraction, and departure from the model; \citet{estevao2004} argue that, for a wide class of design-consistent estimators and under general conditions, ``borrowing strength'' is not the best technique. The comparison between area- and unit-level estimators has been systematized by \citet{hidiroglou2016}.

\paragraph{Contribution of this work.} Our goal is to study, within the uni-weight and design-based direct approach, the use of regression trees as the assisting model for estimating totals in unplanned domains. The contributions are:
\begin{enumerate}[label=(\roman*),leftmargin=2.2em]
\item to formulate the estimator assisted by a domain-specific regression tree (\text{RT}$_D$) and to show that it can be expressed as a linear combination of sample observations with weights independent of the study variable that estimate the domain population size without error;
\item to characterize the estimator built from a population-level tree with uni-weight weights (\text{RT}$_U$), and to explain, extending to trees the argument of \citet{estevao1999}, why it tends to behave like the Horvitz--Thompson estimator within domains;
\item to discuss the \emph{trade-off} between domain-level precision and population-level optimality, and its relevance for large-scale production;
\item to illustrate empirically the behavior of the estimators using microdata from the Uruguayan Continuous Household Survey (ECH).
\end{enumerate}

The remainder of the paper is organized as follows. Section~\ref{sec:marco} fixes the framework and notation. Section~\ref{sec:asistida} reviews model-assisted estimation, the generalized regression estimator, and its reading as calibration. Section~\ref{sec:rt} presents the tree-assisted estimator at the population level. Section~\ref{sec:dominios} develops the two strategies for domain estimation. Section~\ref{sec:sim} reports the simulation study. Section~\ref{sec:disc} discusses limitations and concludes.

\section{Estimation framework and notation}\label{sec:marco}

Let $U=\{1,2,\dots,i,\dots,N\}$ be a finite population of $N$ eligible units (for example, the working-age individuals of a country). The goal is to estimate the total of a variable of interest $y$,
\begin{equation}
Y=\sum_{i\in U} y_i .
\end{equation}
In addition to the study variable, a vector of auxiliary variables $\bx_i$ is available (from censuses, registers, or other sources), known for all units of the frame or, at least, with known population totals $\bm{X}=\sum_{i\in U}\bx_i$.

A sampling design is defined on the basis of the available information (strata, measures of size, etc.) that assigns each unit a positive first-order inclusion probability,
\begin{equation}
\pi_i=\Pr[i\in s] > 0, \qquad \forall\, i\in U,
\end{equation}
and second-order probabilities $\pi_{ij}=\Pr[i,j \in s]$. The sample $s$ is selected and $y_i$ (together with the auxiliary information) is observed for $i\in s$. The basic design weight is $w_i=\pi_i^{-1}$.

\paragraph{Horvitz--Thompson estimator.} The simple estimator uses only the information on the variable of interest from the units included in the sample \citep{horvitz1952}:
\begin{equation}\label{eq:ht}
\Yhat^{\mathrm{HT}}=\sum_{i\in s}\frac{y_i}{\pi_i}=\sum_{i\in s} w_i\, y_i .
\end{equation}
It is design-unbiased, $\E(\Yhat^{\mathrm{HT}})=Y$, with variance
\begin{equation}
\Var(\Yhat^{\mathrm{HT}})=\sum_{i\in U}\sum_{j\in U}(\pi_{ij}-\pi_i\pi_j)\,\frac{y_i}{\pi_i}\,\frac{y_j}{\pi_j},
\end{equation}
and, if the design allows it, an unbiased variance estimator
\begin{equation}
\Vhat(\Yhat^{\mathrm{HT}})=\sum_{i\in s}\sum_{j\in s}\pi_{ij}^{-1}(\pi_{ij}-\pi_i\pi_j)\,\frac{y_i}{\pi_i}\,\frac{y_j}{\pi_j}.
\end{equation}

\section{Model-assisted estimation}\label{sec:asistida}

\subsection{From the difference estimator to the prediction estimator}

Suppose a method $m(\cdot)$ for predicting $y$ is available that \emph{does not depend on the sample}. With its outputs, the difference estimator is defined \citep{cassel1976}:
\begin{equation}\label{eq:diff}
\Yhat^{\mathrm{DIFF}}=\sum_{i\in U} m(\bx_i)+\sum_{i\in s} w_i\bigl(y_i-m(\bx_i)\bigr).
\end{equation}
This estimator is unbiased regardless of the quality of $m(\cdot)$, $\E(\Yhat^{\mathrm{DIFF}})=Y$. Since the first term is not random, its design variance is
\begin{equation}
\Var(\Yhat^{\mathrm{DIFF}})=\sum_{i\in U}\sum_{j\in U}(\pi_{ij}-\pi_i\pi_j)\,\frac{y_i-m(\bx_i)}{\pi_i}\,\frac{y_j-m(\bx_j)}{\pi_j}.
\end{equation}
The variance will be smaller than that of $\Yhat^{\mathrm{HT}}$ to the extent that the residuals $y_i-m(\bx_i)$ have lower variability than the raw data $y_i$. If the method has poor predictive power, the variance approaches that of the Horvitz--Thompson estimator.

The practical problem is that one rarely has an $m(\cdot)$ that is independent of the sample and has good predictive power. Model-assisted estimation solves this by introducing a superpopulation model
\begin{equation}
y_i=m(\bx_i)+\epsilon_i,\qquad \E(\epsilon_i)=0,
\end{equation}
which need not be true for $U$, but whose usefulness increases with its fit. The method $m(\cdot)$ is estimated using the sample data, $\mh(\cdot)$ (which depends on $s$), and substituted into \eqref{eq:diff}, yielding the prediction estimator:
\begin{equation}\label{eq:pred}
\Yhat^{\mathrm{PRED}}=\sum_{i\in U}\mh(\bx_i)+\sum_{i\in s} w_i\bigl(y_i-\mh(\bx_i)\bigr).
\end{equation}
$\Yhat^{\mathrm{PRED}}$ is asymptotically unbiased, $\E(\Yhat^{\mathrm{PRED}})\to Y$. It has no exact variance form; a common estimator is based on the difference-estimator formula, replacing the residuals by $e_i=y_i-\mh(\bx_i)$:
\begin{equation}\label{eq:varpred}
\Vhat(\Yhat^{\mathrm{PRED}})=\sum_{i\in s}\sum_{j\in s}\pi_{ij}^{-1}(\pi_{ij}-\pi_i\pi_j)\,\frac{e_i}{\pi_i}\,\frac{e_j}{\pi_j}.
\end{equation}
The precision of the estimator therefore depends on the predictive power of the model. Among the assisting models that have been used are linear regression \citep{cassel1976}, logistic regression \citep{lehtonen1998}, nonparametric calibration \citep{montanari2005}, regression trees \citep{mcconville2019}, and random forests \citep{dagdoug2023}.

\subsection{The generalized regression estimator and calibration}

When the assisting model is linear, $y_i=\bx_i\T\bm{\beta}+\epsilon_i$ with $\epsilon_i\sim(0,\sigma_i^2)$, the estimator \eqref{eq:pred} takes the form of the generalized regression estimator (GREG):
\begin{equation}
 \Yhat^{\mathrm{GREG}} = \sum_{i\in U}\bx_i\T\Bhat+\sum_{i\in s} w_i\bigl(y_i-\bx_i\T\Bhat\bigr),   
\end{equation}

where
$$
 \Bhat = \Bigl(\sum_{i\in s} w_i\bx_i\bx_i\T/\sigma_i\Bigr)^{-1}\Bigl(\sum_{i\in s} w_i\bx_i y_i/\sigma_i\Bigr).
$$

A key property for production is that it only requires knowing $\bx_i$ for $i\in s$ and the totals $\bm{X}=\sum_{i\in U}\bx_i$ (not the micro-level information for the entire population), so that 

$$\Yhat^{\mathrm{GREG}}=\bm{X}\T\Bhat+\sum_{i\in s} w_i(y_i-\bx_i\T\Bhat)$$.

The estimator is expressed as a weighted sum
\begin{equation}\label{eq:greg-pesos}
\Yhat^{\mathrm{GREG}}=\sum_{i\in s} w_i^*\, y_i,
\end{equation}
where $w_i^*=g_i w_i$ and  

$$g_i=1+(\bm{X}-\Xhat^{\mathrm{HT}})\T\Bigl(\sum_{i\in s} w_i\bx_i\bx_i\T/\sigma_i\Bigr)^{-1}\bx_i/\sigma_i$$

The weights $w_i^*$ \emph{do not depend on $y$} and therefore serve different variables of interest (the uni-weight property). Moreover, the estimator is calibrated to the auxiliary information,
\begin{equation}\label{eq:cal}
\sum_{i\in s} w_i^*\,\bx_i=\sum_{i\in U}\bx_i ,
\end{equation}
which provides coherence with other sources and improves the credibility of the figures for users.

The calibration approach \citep{deville1992,sarndal2007} reaches the same estimator from a different angle: seeking weights $w_i^*=g_i w_i$ that minimize a distance $L(w^*,w)=\sum_{i\in s} L(w_i^*,w_i)$ subject to \eqref{eq:cal}. For the least-squares distance $L(w^*,w)=\sum_{i\in s}(w_i^*-w_i)^2/w_i$ one recovers the GREG, $\Yhat^{\mathrm{CAL}}=\Yhat^{\mathrm{GREG}}$, the standard strategy in NSOs.

\paragraph{Limitations of the GREG.} The estimator may exhibit (i) high variability of the weights $w_i^*$, which unnecessarily inflates standard errors; (ii) negative weights $w_i^*$; and (iii) null precision gains relative to $\Yhat^{\mathrm{HT}}$ if the model is inadequate. Furthermore, $\Yhat^{\mathrm{GREG}}$ can behave poorly when the number of auxiliary variables exceeds what $n$ can support, and post-stratification (a special case of the GREG with categorical $\bx$) becomes fragile when there are many qualitative variables with many categories whose interactions define the cells. These difficulties motivate the use of assisting models that automatically select the relevant structure.

\section{Regression-tree-assisted estimator}\label{sec:rt}

A regression tree \citep{breiman1984} recursively partitions the prediction space into nodes or boxes $B_1,\dots,B_l,\dots,B_L$ so that the units within each box are homogeneous with respect to $y$. Fitted to the sample data through a design-consistent algorithm \citep{toth2011}, it gives rise to a special case of $\Yhat^{\mathrm{PRED}}$ in which $\mh$ is the sample fit of the tree. The resulting estimator \citep{mcconville2019} can be read as a post-stratified estimator with automatically chosen cells:
\begin{equation}\label{eq:rt}
\Yhat^{\mathrm{RT}}=\sum_{l=1}^{L}\frac{N_l}{\Nhat_l}\sum_{i\in s_l} w_i\, y_i
=\sum_{i\in s} w_i^*\, y_i,\end{equation}

where $w_i^*=\frac{N_l}{\Nhat_l}\, w_i \ \ (i\in s_l)$, $s_l=s\cap B_l$, $N_l$ is the population count of node $l$, and $\Nhat_l=\sum_{i\in s_l} w_i$ its Horvitz--Thompson estimator. The weights $w_i^*$ estimate the node counts without error and satisfy $N=\sum_{i\in s} w_i^*$; in particular, they \emph{do not depend on $y$}, so the estimator is compatible with the uni-weight approach.

Compared with classical post-stratification, the tree performs automatic selection: instead of building all possible interactions among qualitative variables, it includes only the variables (and interactions) that explain the variability of $y$, reducing the dimension of the model. The estimator can also be viewed as a calibrated estimator whose calibration equation is determined automatically by the algorithm. A variance estimator is obtained from \eqref{eq:varpred} by replacing the residuals by
\begin{equation}\label{eq:rt-resid}
e_i=y_i-\Bigl(\sum_{i\in s_l} w_i\Bigr)^{-1}\Yhat^{\mathrm{HT}}_l ,
\end{equation}
that is, the residual with respect to the weighted mean of the node to which the unit belongs. The estimator is asymptotically unbiased; the size of the tree (number of boxes) governs its variance, as illustrated in Section~\ref{sec:sim}.

\begin{proposition}[Design consistency of the tree-assisted estimator]\label{prop:rt}
Consider a sequence of nested finite populations and sampling designs in the standard finite-population asymptotic framework \citep{sarndal1992}, and assume:
\begin{enumerate}[label=\textup{(A\arabic*)},leftmargin=2.8em]
\item \textup{(moments)} the study variable and the auxiliary vector have uniformly bounded fourth moments across the sequence;
\item \textup{(design)} the inclusion probabilities satisfy $\lambda\le N\pi_i\le\Lambda$ for constants $0<\lambda\le\Lambda<\infty$, and the design admits a design-consistent Horvitz--Thompson variance estimator;
\item \textup{(tree)} the recursive-partitioning algorithm yields a sample-fitted partition that is design consistent for a fixed limiting partition $\{B_l\}_{l=1}^{L}$ in the sense of \citet{toth2011}, with $L$ bounded and expected within-node sample sizes $\E(n_l)\to\infty$.
\end{enumerate}
Then $\Yhat^{\mathrm{RT}}$ is design consistent, $N^{-1}(\Yhat^{\mathrm{RT}}-Y)=O_p(n^{-1/2})$, and the variance estimator defined by \eqref{eq:varpred}--\eqref{eq:rt-resid} is design consistent for the asymptotic variance of $\Yhat^{\mathrm{RT}}$.
\end{proposition}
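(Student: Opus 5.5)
The argument proceeds in two stages. First we prove the result for the \emph{fixed} limiting partition $\{B_l\}_{l=1}^{L}$. Then we transfer it to the sample-fitted partition using (A3), in the spirit of \citet{toth2011} and \citet{mcconville2019}.

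\textbf{Stage 1: fixed partition.} Let $\tilde Y=\sum_{l=1}^L (N_l/\Nhat_l)\Yhat^{\mathrm{HT}}_l$ denote the estimator \eqref{eq:rt} computed with the limiting boxes. We treat it as a post-stratified estimator, that is, a smooth function of the Horvitz--Thompson vector $(\Nhat_l,\Yhat^{\mathrm{HT}}_l)_{l\le L}$, and note that $L$ is bounded.
\begin{enumerate}[label=(\roman*),leftmargin=2.2em]
\item By (A1)--(A2), each normalized HT total satisfies $N^{-1}(\Nhat_l-N_l)=O_p(n^{-1/2})$ and $N^{-1}(\Yhat^{\mathrm{HT}}_l-Y_l)=O_p(n^{-1/2})$. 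This follows from Chebyshev's inequality together with the standard bound $\Var(N^{-1}\Yhat^{\mathrm{HT}})=O(n^{-1})$, which holds under bounded moments and $\lambda\le N\pi_i\le\Lambda$.
\item Since $\E(n_l)\to\infty$ and $L$ is bounded, the node proportions $N_l/N$ are bounded away from zero. Hence $N_l/\Nhat_l=1+O_p(n^{-1/2})$.
\item A first-order Taylor expansion gives
\[
N^{-1}(\tilde Y-Y)=N^{-1}\sum_{i\in s}w_i\bigl(y_i-\bar y_{l(i)}\bigr)-N^{-1}\sum_{i\in U}\bigl(y_i-\bar y_{l(i)}\bigr)+O_p(n^{-1}).
\]
Here $\bar y_l=Y_l/N_l$ is the population node mean. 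The leading term is an HT residual estimator, so it is $O_p(n^{-1/2})$. This establishes consistency for the fixed partition and identifies the asymptotic variance as $\Var$ of the difference estimator with residuals $y_i-\bar y_{l(i)}$.
\end{enumerate}

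\textbf{Stage 2: random, sample-fitted partition.} This is the main obstacle. The estimated partition $\widehat{\mathcal B}$ depends on $s$, so the Taylor argument does not apply directly. We would write
\[
\Yhat^{\mathrm{RT}}-Y=(\tilde Y-Y)+(\Yhat^{\mathrm{RT}}-\tilde Y)
\]
and show that $N^{-1}(\Yhat^{\mathrm{RT}}-\tilde Y)=o_p(n^{-1/2})$.

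Design consistency of the partition in the sense of \citet{toth2011} means that the estimated split points converge to the limiting ones. Consequently, the fraction of population units classified differently by the two partitions, $N^{-1}\#\{i:\hat l(i)\neq l(i)\}$, tends to zero in probability. The difference $\Yhat^{\mathrm{RT}}-\tilde Y$ is controlled only by these misclassified units and by the perturbation of node counts and means. Using Cauchy--Schwarz together with the fourth moments in (A1), we would bound its normalized size by
\[
\bigl(\text{misclassified fraction}\bigr)^{1/2}\times O(1).
\]
This bound gives consistency. The $O_p(n^{-1/2})$ rate, however, requires the misclassified fraction to be $O_p(n^{-1})$, or else an empirical-process (uniform over a VC class of boxes) argument over the bounded family of axis-aligned partitions with $L$ leaves. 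We would attempt the latter first: consider the HT empirical process indexed by boxes, use its uniform $O_p(n^{-1/2})$ bound, and use its stochastic equicontinuity under (A2) to conclude that the random-box term is asymptotically equivalent to the fixed-box term. This is precisely the place where the assumption in (A3) has to be invoked carefully.

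\textbf{Variance estimator.} We would compare $\Vhat$ built from the residuals \eqref{eq:rt-resid} with the HT variance estimator built from the limiting residuals $y_i-\bar y_{l(i)}$. By (A2), the latter is design consistent for the asymptotic variance found in Stage 1. The difference between the residuals, $\bar y_{l(i)}-\widehat{\bar y}_{\hat l(i)}$, is uniformly $o_p(1)$ over nodes by Stages 1--2. Expanding the quadratic form and bounding the cross terms with Cauchy--Schwarz, using $\lambda\le N\pi_i\le\Lambda$ and the bounded moments, then shows that the normalized difference $n N^{-2}\bigl|\Vhat-\Vhat_{\mathrm{lim}}\bigr|=o_p(1)$. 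This completes the proof.
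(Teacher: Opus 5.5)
Your architecture matches the paper's: replace the sample-fitted partition by the limiting one, linearize the post-stratified estimator on $\{B_l\}$ into the difference estimator with $m(\bx)=\sum_l\mathbf{1}\{\bx\in B_l\}\bar y_l$, and compare the plug-in variance estimator with the one built from these fixed residuals. The difference is Stage~2, and as written it is a genuine gap, as you yourself note. What you actually establish there is a bound of order (misclassified fraction)$^{1/2}$, i.e.\ $N^{-1}(\Yhat^{\mathrm{RT}}-\tilde Y)=o_p(1)$. This gives consistency, but not the stated rate $O_p(n^{-1/2})$. More importantly, it does not give the asymptotic equivalence $N^{-1}(\Yhat^{\mathrm{RT}}-\tilde Y)=o_p(n^{-1/2})$. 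That equivalence is what identifies the asymptotic variance of $\Yhat^{\mathrm{RT}}$ with that of the limiting difference estimator, and the variance-estimator claim is consistency for exactly that quantity. The stochastic-equicontinuity route you sketch is the right tool for continuous splits, but it is only announced. It also needs design conditions (a uniform CLT for the HT process, hence conditions on the $\pi_{ij}$) that (A2) does not contain.

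The paper does not prove this step either: it reads (A3), ``design consistent in the sense of \citet{toth2011}'', as delivering the $o_p(n^{-1/2})$ replacement, and cites \citet{toth2011} and \citet{mcconville2019}. So either invoke (A3) in that form explicitly, or close the step directly. A clean way in the paper's setting uses the fact that the covariates have finite support (single-year age, sex, education, race). There are then finitely many partitions with at most $L$ leaves, so consistency of the fitted partition can be taken to mean $\Pr(\widehat{\mathcal B}=\{B_l\})\to1$. On that event $\Yhat^{\mathrm{RT}}=\tilde Y$ and $\Vhat$ is the fixed-partition plug-in estimator, so Stage~2 and the variance comparison both reduce to fixed-partition arguments.

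The same problem affects your variance step. The difference $\bar y_{l(i)}-\widehat{\bar y}_{\hat l(i)}$ is \emph{not} uniformly $o_p(1)$: for units whose node changes, it is a difference between means of two different nodes, of order one. You need the weighted sample share of reassigned units to vanish, and then Cauchy--Schwarz with (A1). That share is again indexed by a data-dependent set, so you need either a uniformity argument or the discrete shortcut above.

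Two smaller points:
\begin{itemize}
\item In Stage~1(ii), $\E(n_l)\to\infty$ does not force $N_l/N$ away from zero (take $N_l/N=n^{-1/2}$ under SRS). However, $N_l/\Nhat_l-1=O_p(\E(n_l)^{-1/2})=o_p(1)$ multiplies a term that is $O_p(n^{-1/2})$, so the remainder is still $o_p(n^{-1/2})$ and your conclusion survives.
\item The bound $\Var(N^{-1}\Yhat^{\mathrm{HT}})=O(n^{-1})$ in Stage~1(i) requires $\lambda\le N\pi_i/n\le\Lambda$; as printed, $N\pi_i\le\Lambda$ would force $n$ to stay bounded. It also requires a second-order condition such as $\max_{i\neq j}|\pi_{ij}-\pi_i\pi_j|=O(n/N^2)$, which neither (A2) nor your argument states.
\end{itemize}
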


\begin{proof}[Proof sketch]
Conditionally on the partition, $\Yhat^{\mathrm{RT}}$ in \eqref{eq:rt} is the post-stratified estimator on the cells $\{B_l\}$. By (A3) the sample-fitted partition converges to the limiting partition, so $\Yhat^{\mathrm{RT}}$ differs from the post-stratified estimator on $\{B_l\}$ by $o_p(n^{-1/2})$ \citep{toth2011}. On the limiting partition the assisting function $m(\bx)=\sum_{l}\mathbf{1}\{\bx\in B_l\}\,\bar y_{B_l}$ (the population node means) is fixed and non-random, so $\Yhat^{\mathrm{RT}}$ is asymptotically equivalent to the difference estimator \eqref{eq:diff} with this $m$. The difference estimator is design unbiased with the variance given in Section~\ref{sec:asistida}, and the plug-in residuals \eqref{eq:rt-resid} yield a design-consistent variance estimator under (A1)--(A2). This coincides with the result of \citet{mcconville2019}, who establish design consistency and asymptotic normality of the regression-tree estimator building on the design-consistency of sample-fitted trees of \citet{toth2011}.
\end{proof}

\begin{remark}
Condition (A3) makes explicit the trade-off illustrated in Table~\ref{tab:U}: keeping $L$ bounded (or letting it grow slowly with $n$) preserves $\E(n_l)\to\infty$ and controls the Kish effect, whereas over-grown trees violate it and inflate the variance.
\end{remark}

\section{Domain estimation}\label{sec:dominios}

\subsection{Ground rules}

Let $U_1,\dots,U_d,\dots,U_D$ be disjoint domains with $U=\bigcup_{d=1}^{D}U_d$ and $N=\sum_{d=1}^{D}N_d$. We assume \emph{unplanned} domains (they are not design strata), so the domain sample size $n_d=|s\cap U_d|$ is random. The goal is to estimate
\begin{equation}
Y_d=\sum_{i\in U_d} y_i ,
\end{equation}
by means of \emph{direct} estimators $\Yhat_d=\sum_{i\in s_d} w_i y_i$, with $s_d=s\cap U_d$.

The extended variable $y_{d}$ is useful, defined by $y_{di}=y_i$ if $i\in U_d$ and $y_{di}=0$ otherwise, so that $Y_d=\sum_{i\in U} y_{di}$ and the Horvitz--Thompson estimator for the domain is
\begin{equation}\label{eq:htd}
\Yhat_d^{\mathrm{HT}}=\sum_{i\in s_d} w_i y_i=\sum_{i\in s} w_i y_{di},
\end{equation}
unbiased, with a variance estimator obtained by replacing $y$ with $y_d$ in \eqref{eq:ht}.

\subsection{Two prediction strategies for domains}

\paragraph{Strategy 1: population-level model (RT$_U$).} A single model is fitted with the whole sample $s$, and the domain estimator is built with the predictions $\mh(\bx_i)$, $i\in U_d$, plus an adjustment term that protects the estimator:
\begin{equation}\label{eq:predU}
\Yhat_d^{\mathrm{PRED},U}=\sum_{i\in U_d}\mh(\bx_i)+\sum_{i\in s_d} w_i\bigl(y_i-\mh(\bx_i)\bigr).
\end{equation}
This estimator is \emph{not direct} in the strict sense (it uses $\mh$ fitted outside the domain), but it satisfies the additivity or \emph{benchmarking} property with respect to the population total, creates economies of scale (a single algorithm), and when $\mh$ comes from a GREG or a tree, gives rise to a single set of weights $w_i^*$ applicable to any domain (uni-weight). Its weakness is that the variance reduction relative to \eqref{eq:htd} may be null if the domain has its own characteristics not captured by the global model.

\paragraph{Strategy 2: domain-level model (RT$_D$).} A specific model $\mh_d(\cdot)$ is fitted using only the data in $s_d$:
\begin{equation}\label{eq:predD}
\Yhat_d^{\mathrm{PRED},D}=\sum_{i\in U_d}\mh_d(\bx_i)+\sum_{i\in s_d} w_i\bigl(y_i-\mh_d(\bx_i)\bigr).
\end{equation}
This estimator \emph{is direct}, can be expressed as $\sum_{i\in s_d} w_i^* y_i$ when the method allows it (GREG, trees), and requires that $\E(n_d)$ be large enough to estimate $\mh_d(\cdot)$ stably. When the domain has its own characteristics, the variance reduction relative to \eqref{eq:predU} can be substantial.

\subsection{Domain-level regression tree}

A domain-specific regression tree $\mh_d(\cdot)$ is defined, which partitions a reduced prediction space into nodes $B_{d1},\dots,B_{dl},\dots,B_{dL_d}$. The estimator of the domain total is
\begin{equation}\label{eq:rtd}
\Yhat_d^{\mathrm{RT}}=\sum_{i\in U_d}\mh_d(\bx_i)+\sum_{i\in s_d} w_i\bigl(y_i-\mh_d(\bx_i)\bigr)
=\sum_{i\in s_d} w_i^*\, y_i=\sum_{i\in s_d}\frac{N_{dl}}{\Nhat_{dl}}\, w_i\, y_i ,
\end{equation}
with $w_i^*=(N_{dl}/\Nhat_{dl})w_i$ that estimate the domain node counts without error and satisfy
\begin{equation}
\sum_{i\in s_d} w_i^*=N_d .
\end{equation}
If the domains are disjoint, the collection $\{w_i^*\}_{i\in s}$ obtained by pooling the weights of all domains constitutes a single weighting system: it can be used to estimate at the level of $U$, $\Yhat=\sum_{d=1}^{D}\sum_{i\in s_d} w_i^* y_i$, for different parameters, and for different domains. The price is a \emph{trade-off}: obtaining more precise estimates within each domain does not imply having the ``optimal'' estimator at the population level (for example, relative to $\Yhat^{\mathrm{RT}}$ with a single global tree). In large-scale surveys this cost is usually minor.

\subsection{Why the population-level model behaves like Horvitz--Thompson within domains}\label{sec:colapso}

The contrasting behavior of the two strategies admits a transparent explanation in the linear case, which extends heuristically to the tree. If the model is linear and estimated at the level of $U$, the domain estimator can be written as
\begin{equation}\label{eq:colapso}
\Yhat_d^{*}=\sum_{i\in s_d} w_i^* y_i=\Yhat_d^{\mathrm{HT}}+(\bm{X}-\Xhat)\T\widehat{\bm{R}}_U,
\end{equation}
where 
$$\bm{R}_U=\Bigl(\sum_{i\in s} w_i\bx_i\bx_i\T\Bigr)^{-1}\Bigl(\sum_{i\in s} w_i\bx_i\, y_{di}\Bigr),$$
with 
$$w_i^*=w_i\bigl[1+(\bm{X}-\Xhat^{\mathrm{HT}})\T(\sum_{i\in s} w_i\bx_i\bx_i\T)^{-1}\bx_i\bigr].$$
A variance estimator for $\Yhat_d^{*}$ is
\begin{equation}
\Vhat(\Yhat_d^{*})=\sum_{i\in s}\sum_{j\in s}\pi_{ij}^{-1}(\pi_{ij}-\pi_i\pi_j)\,\frac{e_i}{\pi_i}\,\frac{e_j}{\pi_j},
\end{equation}
where, crucially, the errors are
\begin{equation}\label{eq:errs}
e_i=
\begin{cases}
y_i-\bx_i\T\bm{R}_U, \ i\in U_d,\\[2pt]
-\,\bx_i\T\bm{R}_U, \ i\notin U_d .
\end{cases}
\end{equation}
The units \emph{outside} the domain contribute residuals $-\bx_i\T\bm{R}_U$ that do not vanish: the population fit was not designed to make the variability within $U_d$ small. As a consequence, the residual sum of squares is not reduced relative to the Horvitz--Thompson estimator, and $\Yhat_d^*$ performs similarly to $\Yhat_d^{\mathrm{HT}}$, which uses no auxiliary information within the domains of interest. As \citet{estevao1999} already noted, the level (population, domain, or intermediate) at which the auxiliary information is incorporated is a critical factor for the variance of the uni-weight estimator.

The argument carries over to the tree: a tree fitted at the level of $U$ defines nodes that are homogeneous with respect to $y$ \emph{in the population}, but those nodes need not be homogeneous \emph{within} a domain that has its own structure. By contrast, the domain-specific tree seeks nodes that are homogeneous in $U_d$, which is the source of RT$_D$'s precision gains.

\begin{proposition}[The level at which the model is fitted]\label{prop:level}
Under the conditions of Proposition~\ref{prop:rt} applied to the extended variable $y_d$, and writing $\mathrm{AV}(\cdot)$ for the asymptotic design variance:
\begin{enumerate}[label=\textup{(\alph*)},leftmargin=2.4em]
\item \textup{(population-level fit, RT$_U$)} if the assisting model is fitted on all of $s$, then, because $y_{di}=0$ for every $i\notin U_d$, the fitted relationship for $y_d$ is attenuated and
\[
\mathrm{AV}\!\bigl(\Yhat_d^{\mathrm{PRED},U}\bigr)=\mathrm{AV}\!\bigl(\Yhat_d^{\mathrm{HT}}\bigr)\,(1+o(1)),
\]
i.e.\ there is asymptotically no efficiency gain over Horvitz--Thompson for the domain total;
\item \textup{(domain-level fit, RT$_D$)} if the assisting model is fitted within the domain using $s_d$, with $\E(n_d)\to\infty$ and within-domain predictive power $R_d^2\in(0,1]$, then
\[
\mathrm{AV}\!\bigl(\Yhat_d^{\mathrm{PRED},D}\bigr)=\bigl(1-R_d^2\bigr)\,\mathrm{AV}\!\bigl(\Yhat_d^{\mathrm{HT}}\bigr)\,(1+o(1))\le \mathrm{AV}\!\bigl(\Yhat_d^{\mathrm{HT}}\bigr)(1+o(1)),
\]
with the reduction governed by $R_d^2$.
\end{enumerate}
\end{proposition}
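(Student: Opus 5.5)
Both parts rest on the linearization in the proof of Proposition~\ref{prop:rt}. On the limiting partition, each prediction estimator is asymptotically equivalent to the difference estimator \eqref{eq:diff} with a fixed assisting function. Its asymptotic variance is therefore the Horvitz--Thompson variance of the limiting residuals. We first write $\mathrm{AV}(\Yhat_d^{\mathrm{HT}})=\sum_{i,j\in U}(\pi_{ij}-\pi_i\pi_j)\,y_{di}y_{dj}/(\pi_i\pi_j)$. Each claim then reduces to comparing a quadratic form in residuals with the same form in $y_d$.

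For (a), apply Proposition~\ref{prop:rt} to $y_d$ with the population-level limiting partition $\{B_l\}$. The limiting assisting function is $m(\bx)=\sum_l \mathbf{1}\{\bx\in B_l\}\,\bar y_{d,B_l}$, where $\bar y_{d,B_l}=N_l^{-1}\sum_{i\in B_l\cap U_d}y_i$. Because $y_{di}=0$ off $U_d$, this mean is $(N_{dl}/N_l)\,\bar y_{U_d\cap B_l}$. The residual is $y_i-m(\bx_i)$ on $U_d$ and $-m(\bx_i)$ off $U_d$, exactly as in \eqref{eq:errs}. I would then show that the residual quadratic form equals the $y_d$ form up to $o(1)$ relative error. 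This requires the attenuation factors $N_{dl}/N_l$ to be small, i.e.\ the domain to be a vanishing (or at least small and cross-cutting) share of every node. I expect this to be the main obstacle. As stated, (a) holds only under such an extra condition, for example $\max_l N_{dl}/N_l\to 0$ with the cross terms controlled by (A1)--(A2). If the domain coincided with a node, the statement would fail. I would therefore make the condition explicit, as a small-domain regime consistent with the unplanned-domain setting. Under it, $\|m\|$ is $o(1)$ relative to the spread of $y_d$, and the difference of the two quadratic forms is bounded by Cauchy--Schwarz. That bound uses the design condition $\lambda\le N\pi_i\le\Lambda$ and bounded fourth moments.

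For (b), apply Proposition~\ref{prop:rt} to the subpopulation $U_d$ with sample $s_d$. The requirement $\E(n_d)\to\infty$ replaces $\E(n)\to\infty$, and the within-node sizes satisfy $\E(n_{dl})\to\infty$. The limiting assisting function is then $m_d(\bx)=\sum_l\mathbf{1}\{\bx\in B_{dl}\}\bar y_{U_d\cap B_{dl}}$, and the residuals are zero off $U_d$. I would define $R_d^2$ implicitly as one minus the ratio of the residual quadratic form to the $y_d$ form. Under a design where this form behaves like a weighted sum of squares (e.g.\ approximately Poisson or SRS with negligible fraction), it coincides with the usual within-domain $1-\mathrm{SSE}_d/\mathrm{SST}_d$. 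The inequality is then immediate from $R_d^2\ge0$. Nonnegativity holds because the node means minimize the weighted within-node sum of squares, and the zero (constant) predictor is feasible only if its contribution is dominated. The HT form of $y_d$ includes the off-domain zeros, so I would check that the overall level of $y_d$ inflates the HT variance rather than reducing it. The final step is to confirm that the variance estimator with residuals \eqref{eq:rt-resid} computed inside $s_d$ is consistent, by the last part of Proposition~\ref{prop:rt}.
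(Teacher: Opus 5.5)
Your route is the paper's. You linearize each estimator to the difference estimator with its limiting assisting function, so that each asymptotic variance is the Horvitz--Thompson form of the limiting residuals (as in \eqref{eq:errs}). You then compare that form with the one in $y_d$: attenuation gives (a), and vanishing off-domain residuals give (b).

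Your extra hypothesis in (a) is a correct diagnosis, not a weakness of your proof. The paper's sketch asserts that $\sum_{i\in U}e_i^2/\pi_i$ equals $\sum_{i\in U}y_{di}^2/\pi_i$ ``up to a term of smaller order'', and that is exactly where such a condition is used silently. Under SRS the dropped term is the between-node sum of squares of $y_d$, namely $\sum_l N_l(\bar y_{d,B_l}-\bar Y_d)^2$ with $\bar Y_d=Y_d/N$. It is at most $\max_l(N_{dl}/N_l)\sum_{i\in U_d}y_i^2$, while the $y_d$ form is at least $(1-N_d/N)\sum_{i\in U_d}y_i^2$, so $\max_l N_{dl}/N_l\to0$ does suffice. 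Without such a condition (a) fails. For a domain that cuts across the nodes at random with fixed share $p_d=N_d/N$, the SRS variance ratio is approximately $1-p_dV_B/(\overline{y^2}-p_d\bar y_U^2)$. Here $V_B$ is the between-node variance of $y$ and $\bar y_U$ its population mean, so the ratio is $1-O(p_d)$ rather than $1+o(1)$. For a domain equal to a node, the ratio is $\sigma_d^2/\{\sigma_d^2+(1-p_d)\bar y_{U_d}^2\}$ with $\sigma_d^2=\mathrm{SST}_d/N_d$, which stays bounded away from $1$ even as $p_d\to0$. One caveat: (A2) says nothing about the $\pi_{ij}$. For general designs your Cauchy--Schwarz step therefore needs a standard additional bound on $\max_{i\neq j}|\pi_{ij}-\pi_i\pi_j|$, plus a nondegeneracy lower bound on the Horvitz--Thompson variance of $y_d$.

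In (b), drop the claim that your implicit $R_d^2$ coincides with the usual $1-\mathrm{SSE}_d/\mathrm{SST}_d$ under Poisson or SRS. Because $y_d$ carries the off-domain zeros, the Horvitz--Thompson form is not centred within $U_d$:
\begin{itemize}
\item under Poisson sampling it is $\sum_{i\in U_d}(1-\pi_i)y_i^2/\pi_i$, which is the reference the paper's sketch uses;
\item under SRS it is proportional to $\mathrm{SST}_d+N_d(1-N_d/N)\bar y_{U_d}^2$, whereas the residual form is proportional to $\mathrm{SSE}_d$, since the within-node residuals sum to zero.
\end{itemize}
With the centred within-domain $R_d^2$ you therefore only get $\mathrm{AV}(\Yhat_d^{\mathrm{PRED},D})\le(1-R_d^2)\,\mathrm{AV}(\Yhat_d^{\mathrm{HT}})(1+o(1))$. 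The inequality is strict whenever $\bar y_{U_d}\neq0$, and for a binary $y$ such as employment the level term is of the same order as $\mathrm{SST}_d$. The equality in (b) holds only with the implicit, uncentred definition, which makes it definitional in your proof as in the paper's.

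Your justification of $R_d^2\ge0$ also fails for unequal $\pi_i$. The limiting node means are unweighted population means, which minimise the unweighted, not the $\pi$-weighted, within-node sum of squares, so the residual form can exceed the zero-predictor form. Since the proposition assumes $R_d^2\in(0,1]$, simply invoke that hypothesis.
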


\begin{proof}[Proof sketch]
Both estimators target $Y_d=\sum_{i\in U} y_{di}$, and their asymptotic variances are governed by the population residuals $e_i$ of the assisting model for $y_d$, as in \eqref{eq:errs}.
(a) When the model is fitted on all of $U$, the regression response $y_d$ is identically zero on $U\setminus U_d$, a fraction $1-N_d/N$ of the population; the fit (least squares or tree) is pulled toward the null model, the coefficient $\bm R_U$ is attenuated, and the residual variation $\sum_{i\in U} e_i^2/\pi_i$ equals $\sum_{i\in U} y_{di}^2/\pi_i$ up to a term of smaller order. Since the latter is the leading term of $\mathrm{AV}(\Yhat_d^{\mathrm{HT}})$, the variance ratio tends to $1$.
(b) When the model is fitted within $U_d$, the assisting function captures the within-domain relationship between $y$ and $\bx$; the domain residuals satisfy $\sum_{i\in U_d}(y_i-\mh_d(\bx_i))^2/\pi_i=(1-R_d^2)\sum_{i\in U_d} y_i^2/\pi_i+o(\cdot)$, while by construction the extended residuals vanish on $U\setminus U_d$. Hence the leading variance term shrinks by the factor $1-R_d^2$.
\end{proof}

\begin{remark}
For regression trees the same dichotomy holds with $R_d^2$ replaced by the proportion of within-domain variance of $y$ explained by the partition. A globally fitted tree assigns to each unit the \emph{population} node mean; restricted to $U_d$ these means are generally biased for the domain node means, so the within-domain residual variance is not reduced the mechanism behind the $\texttt{deff}\approx 1$ of RT$_U$ in Section~\ref{sec:sim}. A domain-specific tree instead splits to make the nodes homogeneous within $U_d$.
\end{remark}

\section{Simulation study}\label{sec:sim}

\subsection{Experimental design}

We use public microdata from the Uruguayan Continuous Household Survey (ECH). The population $U$ consists of working-age individuals. Two dichotomous variables of interest are considered: $y^{(1)}=1$ if the person is employed (and $0$ otherwise) and $y^{(2)}=1$ if the person is unemployed (and $0$ otherwise). The target parameters are the totals of employed and unemployed persons, $Y^{(1)}=\sum_{i\in U} y^{(1)}_i$ and $Y^{(2)}=\sum_{i\in U} y^{(2)}_i$.

The auxiliary variables $\bx$, available at the micro level and a priori good predictors, are: single-year age, sex (2 categories), highest educational level attained (5 categories), and race (3 categories).

\subsection{Population-level results}

$R=1000$ samples (replicates) are drawn under simple random sampling of size $n=1000$. For each replicate, $\Yhat^{\mathrm{HT}}$ and the tree-assisted estimators are computed, repeating the experiment for different tree sizes (average number of boxes). We report the relative bias $\mathrm{relbias}(\Yhat)=R^{-1}\sum_{r=1}^{R}\Yhat^{(r)}/Y$, so that a value of $1$ indicates design unbiasedness; the design effect relative to Horvitz--Thompson, $\mathrm{deff}(\Yhat)=\Var_{\mathrm{MC}}(\Yhat)/\Var_{\mathrm{MC}}(\Yhat^{\mathrm{HT}})$, where $\Var_{\mathrm{MC}}$ is the Monte Carlo variance across the $R$ replicates; and the Kish design effect due to weighting, $\mathrm{deffK}=1+\mathrm{CV}^2(w^*)$, with $\mathrm{CV}(w^*)$ the coefficient of variation of the final weights.

\begin{table}[ht]
\centering
\caption{Performance of the tree-assisted estimator at the population level, by tree size. Modeled variable: employed $y^{(1)}$; non-modeled variable: unemployed $y^{(2)}$.}
\label{tab:U}
\begin{tabular}{rccccc}
\toprule
\# boxes (avg) & \texttt{relbias}($\Yhat^{(1)}$) & \texttt{relbias}($\Yhat^{(2)}$) & \texttt{deff}($\Yhat^{(1)}$) & \texttt{deff}($\Yhat^{(2)}$) & \texttt{deffK} \\
\midrule
140 & 0.998 & 1.014 & 0.731 & 1.266 & 1.147 \\
90  & 0.999 & 1.017 & 0.630 & 1.099 & 1.088 \\
60  & 0.998 & 0.997 & 0.649 & 1.061 & 1.062 \\
50  & 1.000 & 1.007 & 0.635 & 1.099 & 1.049 \\
40  & 0.999 & 1.007 & 0.638 & 1.026 & 1.040 \\
35  & 0.999 & 1.007 & 0.647 & 1.019 & 1.033 \\
30  & 0.999 & 1.002 & 0.584 & 0.964 & 1.028 \\
20  & 1.000 & 0.998 & 0.546 & 0.992 & 1.021 \\
10  & 1.001 & 0.997 & 0.667 & 1.023 & 1.012 \\
\bottomrule
\end{tabular}
\end{table}

\noindent Table~\ref{tab:U} shows that $\Yhat^{\mathrm{RT}}_{(1)}$ is approximately unbiased for any tree size. Performance in terms of variance for the modeled variable is clearly superior to that of Horvitz--Thompson (\texttt{deff} around $0.55$--$0.73$), but it begins to deteriorate as the tree grows, owing to the increase in the Kish effect (\texttt{deffK}). For the \emph{non-modeled} variable ($y^{(2)}$), although the covariates were good predictors of both variables, the improvement relative to Horvitz--Thompson is practically null (\texttt{deff}$\approx 1$). This is an intrinsic limit of the uni-weight approach when the model is fitted for a particular variable.

\paragraph{Comparison with post-stratification and raking.} $\Yhat^{\mathrm{RT}}$ is compared with a post-stratified estimator (PS) whose $G=96$ cells arise from the cross of age band (4, arbitrary), sex (2), educational level (4), and race (3) omitting cells with $n_g=0$, which introduces bias and with raking (incomplete post-stratification).

\begin{table}[ht]
\centering
\caption{Comparison of population-level estimators for the employed variable $y^{(1)}$.}
\label{tab:rtvsps}
\begin{tabular}{lcc}
\toprule
Estimator & \texttt{relbias}($\Yhat^{(1)}$) & \texttt{deff}($\Yhat^{(1)}$) \\
\midrule
PS      & 0.972 & 0.610 \\
RAKING  & 0.999 & 0.579 \\
RT      & 0.999 & 0.548 \\
\bottomrule
\end{tabular}
\end{table}

\noindent The tree-assisted estimator achieves the lowest \texttt{deff} and, unlike PS, incurs neither the bias associated with empty cells nor the loss of coherence/comparability of post-stratification with many cells. The empirical distribution of the three estimators shows that PS is shifted relative to RT and raking, consistent with its larger bias.

\begin{figure}[ht]
\centering
\includegraphics[width=0.8\textwidth]{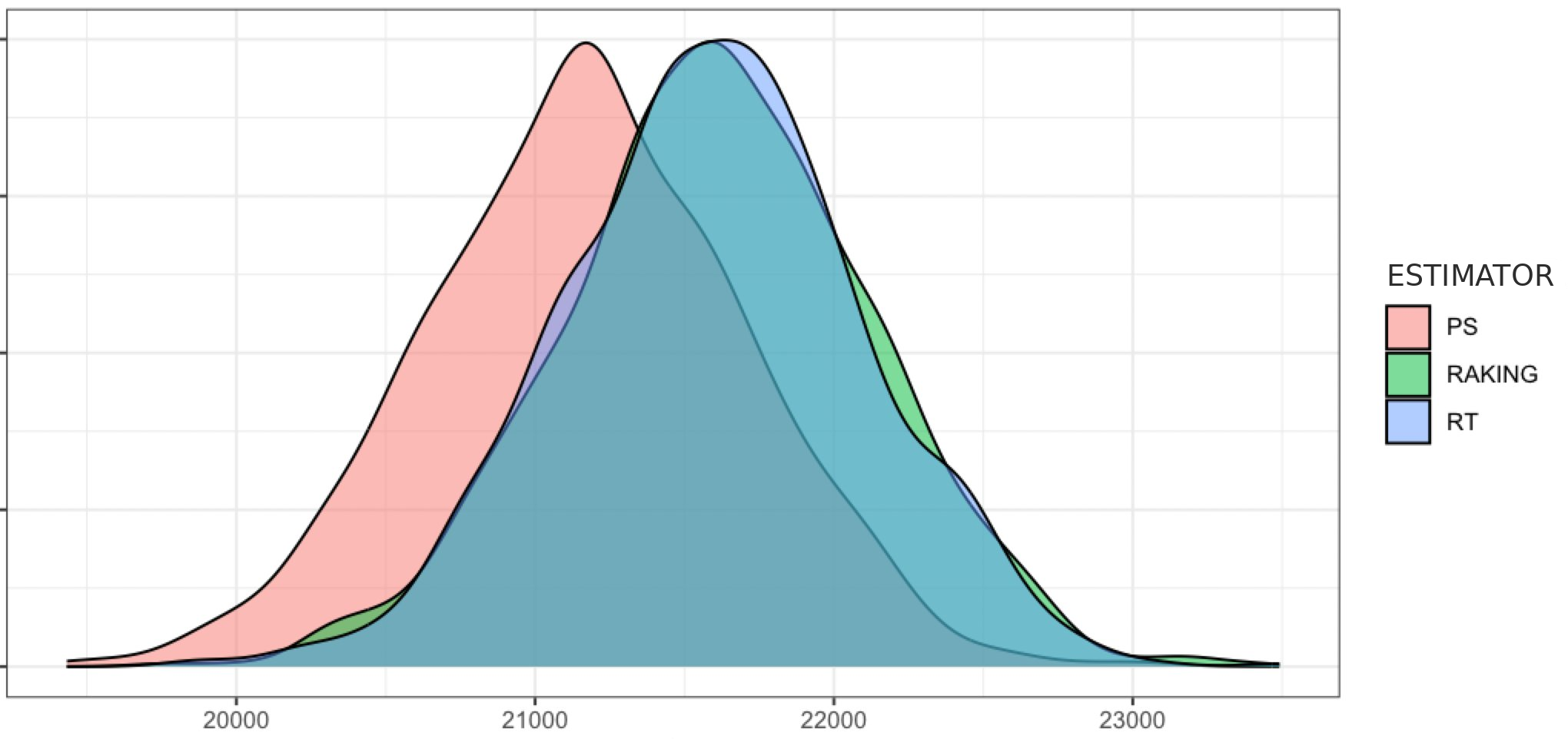}
\caption{Empirical distribution of the estimators (PS, RAKING, RT)}
\label{fig:conceptual}
\end{figure}

\subsection{Domain results}

A similar scenario is replicated with minor changes: the goal is to estimate the total of employed persons by department, $Y_d=\sum_{i\in U_d} y_i$, and the country total $Y=\sum_{d=1}^{D} Y_d$. Samples of size $n=10000$ are drawn under simple random sampling; the domains are unplanned, with $\E(n_d)=(N_d/N)\,n$ large enough that the direct estimators do not begin to fail. For each replicate, three estimators are computed:
\begin{itemize}[leftmargin=2.2em]
\item \textbf{HT}: $\Yhat_d^{\mathrm{HT}}=(N/n)\sum_{i\in s_d} y_i$, and $\Yhat^{\mathrm{HT}}=(N/n)\sum_{i\in s} y_i$.
\item \textbf{RT$_U$}: tree estimated at the $U$ level, with uni-weight weights; $\Yhat_d^{\mathrm{RT}*}=\sum_{i\in s_d} w_i^* y_i$.
\item \textbf{RT$_D$}: domain-specific tree; $\Yhat_d^{\mathrm{RT},D}=\sum_{i\in s_d} w_i^* y_i$, and $\Yhat=\sum_{d=1}^{D}\sum_{i\in s_d} w_i^* y_i$.
\end{itemize}

\noindent The results are as expected from Section~\ref{sec:colapso}: the RT$_U$ estimator exhibits \texttt{deff} close to $1$ across all domains, that is, it does not improve on Horvitz--Thompson whereas RT$_D$ achieves substantial reductions, with \texttt{deff} around $0.3$ in most departments. The advantage of RT$_D$ holds across the range of expected domain sample sizes.

\begin{figure}[!htbp]
\centering
\includegraphics[width=0.8\textwidth]{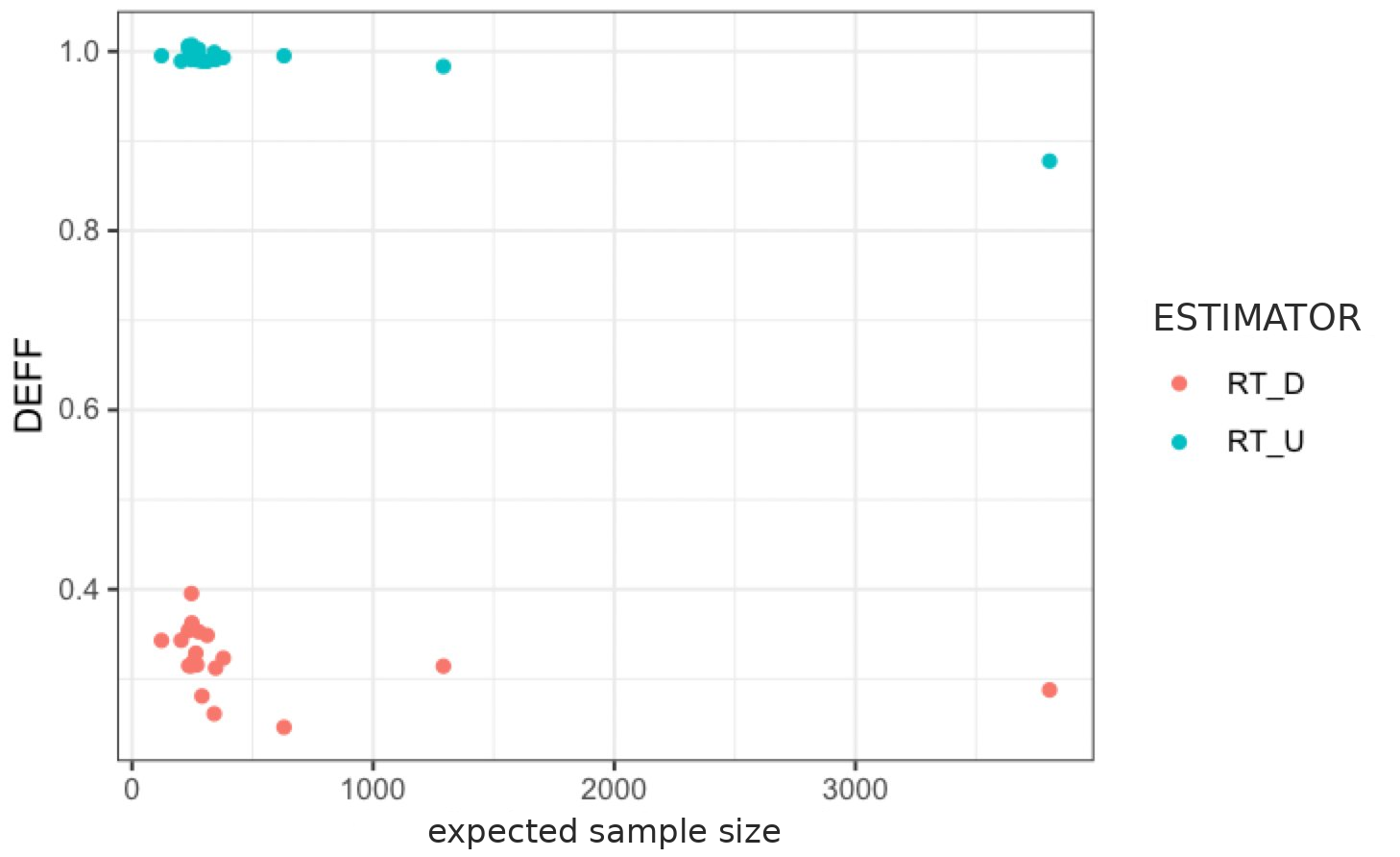}
\caption{Design effect (deff) relative to Horvitz–Thompson versus expected domain sample size, by department, for RT$_U$ and RT$_D$}
\label{fig:conceptual}
\end{figure}

\FloatBarrier
\section{Discussion and conclusions}\label{sec:disc}

The estimator assisted by a domain-specific regression tree (RT$_D$) offers, within the design-based direct framework and the uni-weight approach, an attractive route to improving the precision of estimates in unplanned domains. It is asymptotically unbiased, admits a representation as a weighted sum with weights independent of $y$ that estimate the domain count without error, and, by construction at the level of each disjoint domain, gives rise to a single weighting system applicable to the population and to different parameters. The experiment with the ECH confirms that, whereas the global tree (RT$_U$) brings no gains relative to Horvitz--Thompson within domains, the domain-specific tree (RT$_D$) markedly reduces the variance, in line with the residual mechanism of Section~\ref{sec:colapso} and with the classical result of \citet{estevao1999,estevao2004}.

The approach has limits worth making explicit. First, the improvement is concentrated in the variable used to build the model: for a \emph{non-modeled} variable of interest, the uni-weight estimator based on a tree fitted for another variable does not necessarily improve on Horvitz--Thompson, as Table~\ref{tab:U} shows. Second, growing the tree increases the Kish effect and may deteriorate the variance. Third, the estimator requires $\E(n_d)$ to be large enough; in very small domains, direct estimators lose stability and indirect model-based estimation becomes competitive again \citep{sarndal1984,rao2015}. Finally, the local gain is obtained at the cost of not having the optimal estimator at the population level, although in large-scale surveys this cost is usually minor.

Several extensions follow naturally. The domain-specific construction carries over to random forests \citep{dagdoug2023}, which may capture within-domain structure more flexibly at the cost of greater weight variability. Between the two extremes studied here lies a family of intermediate strategies in which the tree is fitted at a level of aggregation above the domain but below the population, for instance, at the level of geographic regions grouping several departments in line with the ``intermediate level'' already identified by \citet{estevao1999} as a determinant of the variance of the uni-weight estimator. Finally, the choice among RT$_U$, RT$_D$, and indirect model-based estimation can be cast as a decision rule indexed by the expected domain sample size $\E(n_d)$: RT$_D$ where $\E(n_d)$ is large enough for stable within-domain fitting, indirect SAE in the small-$\E(n_d)$ regime \citep{rao2015}, and RT$_U$ (or a global model) where benchmarking to the population total is the binding requirement.

\bibliographystyle{plainnat}
\bibliography{referencias}

\end{document}